\documentclass[letterpaper,conference]{IEEEtran}
\usepackage{amssymb}
\usepackage{verbatim}
\usepackage{graphicx,epsfig}
\usepackage{amsfonts}
\usepackage{subfigure}
\usepackage[normalem]{ulem}
\usepackage{amsmath}
\usepackage{amsthm}
\usepackage[table]{xcolor}

\begin{document}

\title{Throughput Bound of XOR Coded Wireless Multicasting to Three Clients}

\author{Jalaluddin Qureshi, and Adeel Malik\\
Department of Electrical Engineering, Namal College, Mianwali, Pakistan.}

\maketitle

\begin{abstract}
It is a well-known result that constructing codewords over $GF(2)$ to minimize the number of transmissions for a single-hop wireless multicasting is an NP-complete problem. Linearly independent codewords can be constructed in polynomial time for all the $n$ clients, known as maximum distance separable (MDS) code, when the finite field size $q$ is larger than or equal to the number of clients, $q\geq n$. In this paper we quantify the exact minimum number of transmissions for a multicast network using erasure code when $q=2$ and $n=3$, such that $q<n$. We first show that the use of Markov chain model to derive the minimum number of transmissions for such a network is limited for very small number of input packets. We then use combinatorial approach to derive an upper bound on the exact minimum number of transmissions. Our results show that the difference between the expected number of transmissions using XOR coding and MDS coding is negligible for $n=3$.\let\thefootnote\relax\footnote{Published in the proceedings of IEEE CAMAD, Sept. 2015, Surrey, UK.}
\end{abstract}

\newtheorem{axiom}{Axiom}
\newtheorem{lemma}{Lemma}
\newtheorem{theorem}{Theorem}
\newtheorem{proposition}{Proposition}
\newtheorem{corollary}{Corollary}

\section{Introduction} \label{sect:introduction}
Wireless multicasting is an efficient method of disseminating common data to multiple clients. However the shared wireless channel is susceptible to packet losses due to collision and burst errors due to signal fading and channel noise. The automatic repeat request (ARQ) protocol of retransmitting a lost or corrupted packet when the access point (AP) does not receive a positive acknowledgement (ACK) from a client is not scalable for large multicast networks.

Erasure coding is a class of forward error correction (FEC) which corrects packet losses and burst errors. Instead of retransmitting packets using ARQ, in erasure coding lost packets are coded before transmission. Consider for example an AP multicasting packets $p_1$ and $p_2$ to clients $c_1$ and $c_2$. Client $c_1$ did not receive $p_1$, and $c_2$ did not receive $p_2$, instead of retransmitting these two packets, $p_1$ and $p_2$ can be XOR coded, and the codeword $p_1\oplus p_2$ is transmitted which both $c_1$ and $c_2$ can decode to recover the lost packets. 

Erasure coding has been proposed as an efficient method to improve reliability for various single-hop wireless multicast applications due to the broadcast nature of wireless transmission, such as in wireless sensor networks (WSNs) to dissemate software updates for debugging and task modifications~\cite{Dong11}, and multicasting over Wi-Fi~\cite{Chandra09}. It has also been adopted in various transmission standards~\cite{ETSI11}. The Raptor R10 code for instance has been adopted in third generation partnership project (3GPP) multimedia broadcast multicast service (MBMS) for multicasting file delivery and streaming applications, and the Reed-Solomon (RS) code is used in digital video broadcasting (DVB) for multicasting live video.

There is a special interest in erasure codes over $GF(2)$ as it involves the relatively simple operation of XOR addition for encoding and decoding, which minimizes the computation cost during encoding and decoding. It is due to this reason that the widely adopted Raptor R10 code and erasure coding for energy constrained WSN is constructed over $GF(2)$~\cite{Dong11}. However the problem of constructing coding vectors over $GF(2)$ to minimize the number of multicast transmissions is an NP-complete problem in general~\cite{Rouayheb07}. Many heuristic coding algorithms have been proposed to construct efficient erasure codes over $GF(2)$, a survey and comparison of which can be found in~\cite{Qureshi14}. 

When the finite field size is given as $q\geq n$, where $n$ is the number of clients in the network, it has been shown that a linearly independent codeword can be found for all the clients of the network in polynomial time~\cite{KChi10}, such a code is also known as maximum distance separable (MDS) code. The MDS code therefore serves as the lower bound of the expected number of transmissions for any linear erasure code, as each client need to receive exactly $k$ codewords before it can decode the $k$ input packets.

In this work-in-progress paper we quantify the exact minimum number of transmissions for a restricted class of wireless multicast network where $q<n$, given by $q=2$ and $n=3$. To the best of our knowledge this is the first work of its kind to study the throughput bound of wireless multicasting when coding vector for erasure coding is constructed non-randomly and over field size given as $q<n$. 

We first use Markov chain model, validated with simulations, to derive the exact minimum number of transmissions for $k\leq 3$ in Section~\ref{sec:markov}, where $k$ is the number of input packets. Due to increasing space of the number of Markov chain states, we then use combinatorial approach to derive an upper bound on the exact minimum number of transmissions for arbitrary $k$ in Section~\ref{sec:combinatorial}. Part of our future work is to generalize the result for any arbitrary number of clients $n$, and derive an upper bound which is lower than the bound presented in this paper, which we discuss along with the conclusion of our work in the Section~\ref{sec:conclusion}.

\section{Preliminaries}\label{sec:preliminaries}
\subsection{System Model and Notations}
Consider an access point (AP) multicasting $k$ input packets to $n$ clients. Packet loss at each client is assumed to be independent and identically distributed (iid), following the Bernoulli model with packet loss probability of $p$, $0\leq p<1$, and successful packet reception probability of $s=1-p$. The AP has the knowledge of the packets and codewords which each of the clients has received. In this paper we assume that $n=3$ and the finite field size over which an AP generates a coding vector is given by $q=2$. The set of input packets is given by $P$ and the set of codewords which client $c_i$ has received is given by $Y_i$.

The expected number of transmissions before all $n$ clients have $k$ linearly independent codewords is given by $E[t_x]$. The retransmission ratio $R_t$ is given by $\frac{E[t_x]}{k}$. The problem we consider is, what will be the exact minimum number of transmissions before all the $n$ clients receive $k$ linearly independent packets.

The matrix $H_i$, $H_i\in GF(2)^{r_i\times k}$, represents the coding coefficient matrix of the linearly independent codewords client $c_i$ has received. An input packet can be treated as a codeword with a coding vector given by a standard unit vector. The rank of the matrix $H_i$ is denoted by $r_i$. Once client $c_i$ has received $k$ linearly independent codewords, it can decode the $k$ input packets by the operation $H_i^{-1}\cdot C_i^T$, where $C_i$ is the vector of $k$ codewords. We call client $c_i$ satisfied if $r_i=k$, and unsatisfied otherwise. The number of unique codewords which are linearly dependent for at least one unsatisfied client is denoted by $T_d$. 

\subsection{Related Work}
The throughput bound for wireless multicasting has been studied for random linear (RL) code and MDS code. In RL coding, the coding coefficient is randomly and uniformly selected from $GF(q)$. In MDS coding the AP generates a codeword which is linearly independent for all the clients, hence each of the $n$ clients need to receive exactly $k$ codewords before it decodes the $k$ input symbols.

The expected number of transmissions for wireless multicasting using RL code and MDS code is derived by Sagduyu and Ephremides~\cite{Sagduyu07} and Ghaderi \textit{et al.}~\cite{Ghaderi07} respectively. For RL coding Lucani \textit{et al.}~\cite{Lucani09} derived the upper bound on the mean number of codewords which each client need to receive before decoding the $k$ input symbols.

\section{Markov Chain Approach}\label{sec:markov}
In this section we derive the expected number of transmission $E[t_x]$ assuming that the AP always transmits a codeword which is linearly independent for the maximum number of clients for $k\leq 3$. The case of $k=1$ is trivial which we exclude from our result. 

We first define various states for a given network parameters, and construct transition matrix. The variable $E[t_x]$ is calculated by the unique solution of the following equations,  

\begin{equation}
\mu_i=1+\sum_{j=1}^m a_{ij} \mu_j, \hspace{0.45cm} \textrm{for all transient states},
\end{equation}

\noindent where $a_{ij}$ is the probability of transition from state $S_i$ to state $S_j$, $m$ is the total number of states in the Markov chain, and $\mu_i=0$ for all recurrent states~\cite[Section 7.4]{Bertsekas08}. The Markov chain reaches absorbing state iff $r_i=k$ for all clients. We refer interested reader to the referenced book~\cite{Bertsekas08} for description on calculating $E[t_x]$ through this method. 

\begin{figure*}
\centering
\subfigure[$k=2$]{%
\centering
\includegraphics[width=0.31\textwidth]{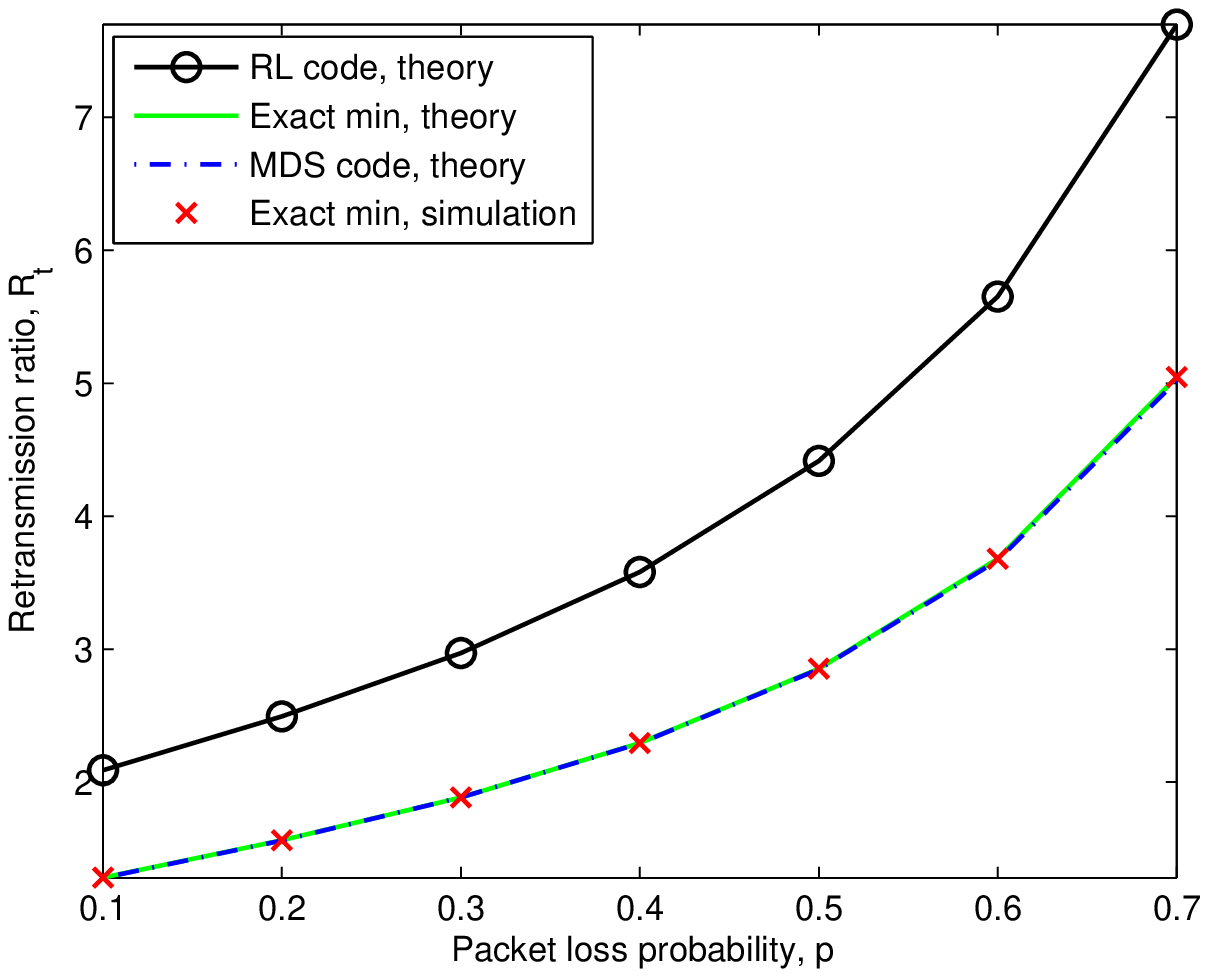}
\label{fig:subfigure1}}
\subfigure[$k=3$]{%
\centering
\includegraphics[width=0.31\textwidth]{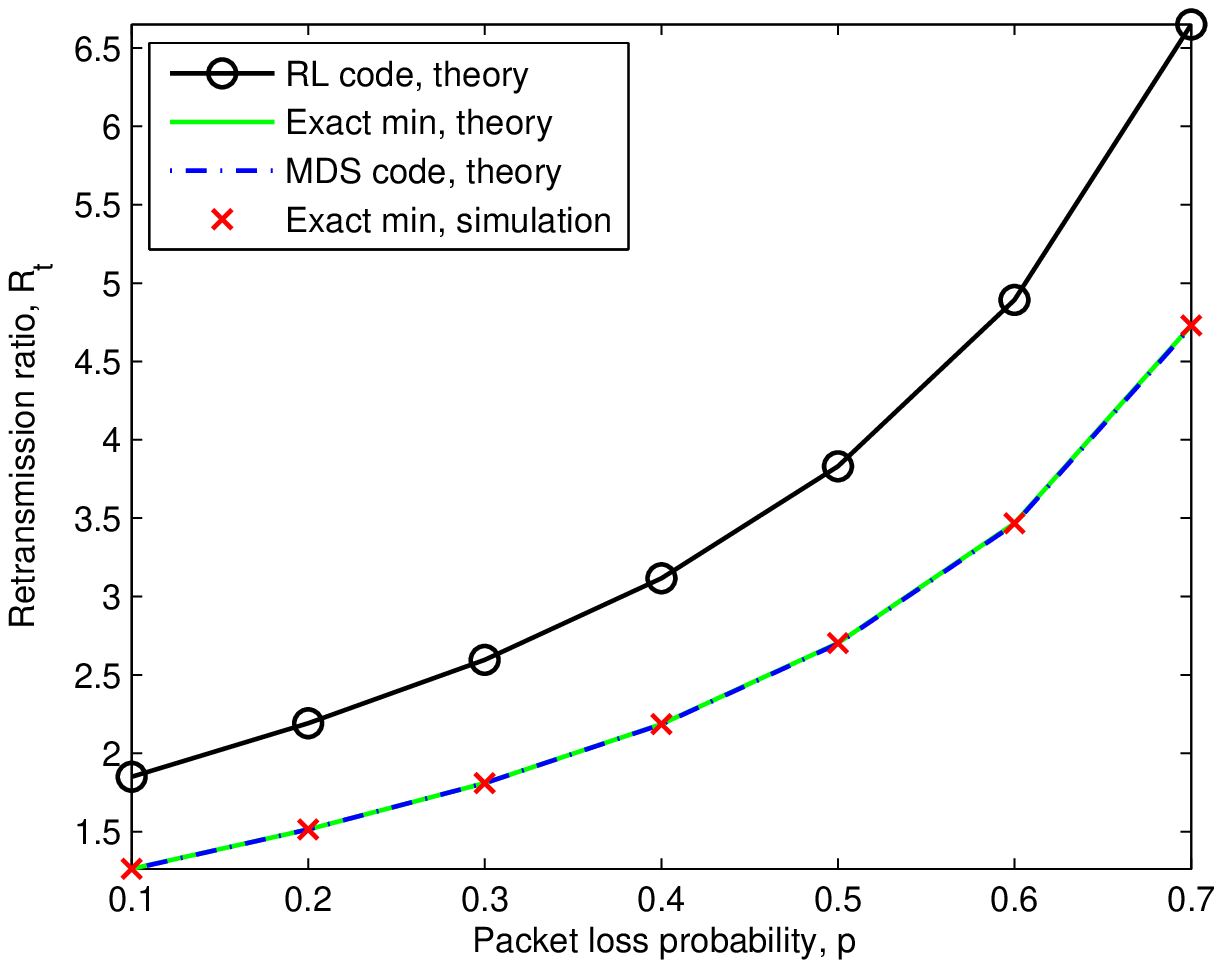}
\label{fig:subfigure2}}
\subfigure[Difference in $R_t$]{%
\centering
\includegraphics[width=0.32\textwidth]{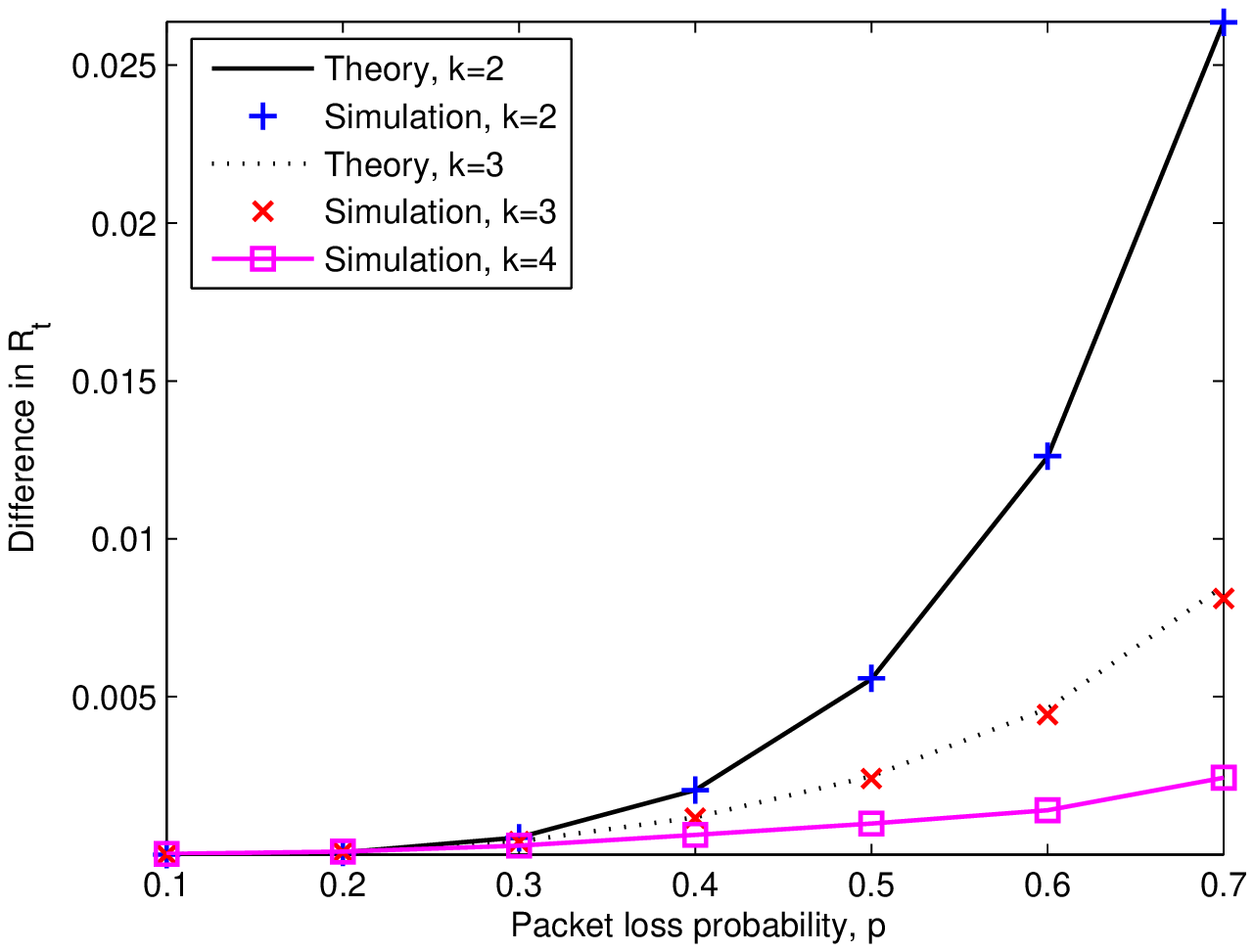}
\label{fig:subfigure3}}
\caption{Retransmission ratio for exact minimum number of transmission using $GF(2)$ erasure code derived using Markov chain model, compared with RL code over $GF(2)$ and MDS code for (a) $k=2$, (b) $k=3$, and (c) difference in $R_t$ of exact minimum transmission of $GF(2)$ erasure code and MDS code.}
\label{fig:Rt}
\end{figure*}

\subsection{Markov chain for $k=2$}
When $k=2$, the set of $\mathsf{span}(P)$ is given as $\{p_1, p_2, p_1\oplus p_2\}$. An AP cannot transmit a linearly independent codeword for all the three unsatisfied clients iff one of the client has $p_1$, another has $p_2$, and the third client has $p_1\oplus p_2$, which is represented by state $S_7$ in our Markov chain model. We distinguish $S_7$ from $S_5$ where an AP may be able to generate a linearly independent codeword, e.g. when all three clients have $p_1$. A self-explanatory description of the other states is given in Table~\ref{table:table_k2}, and a transition matrix is given in Table~\ref{table:matrix_k2}.


\begin{table}[t]
\begin{tabular}{|c|l}
\hline
\textbf{State} & \multicolumn{1}{c|}{\textbf{Description}} \\ \hline
$S_0$ & \multicolumn{1}{l|}{$r_i=0$ for all clients.} \\ \hline
$S_1$ & \multicolumn{1}{l|}{$r_i=1$ for one client, and $r_i=0$ for two clients.} \\ \hline
$S_2$ & \multicolumn{1}{l|}{$r_i=1$ for two clients, and $r_i=0$ for one client, and $T_d=1$.} \\ \hline
$S_3$ & \multicolumn{1}{l|}{$r_i=0$ for two clients, and $r_i=2$ for one client.} \\ \hline
$S_4$ & \multicolumn{1}{l|}{$r_i=1$ for two clients, $r_i=0$ for one client, and $T_d=2$.} \\ \hline
$S_5$ & \multicolumn{1}{l|}{$r_i=1$ for all clients, and $T_d\leq 2$.} \\ \hline 
$S_6$ & \multicolumn{1}{l|}{$r_i=2$, $r_i=1$, and $r_i=0$ for each of the clients.} \\ \hline
$S_7$ & \multicolumn{1}{l|}{$r_i=1$ for all clients, and $T_d=3$.} \\ \hline
$S_8$ & \multicolumn{1}{l|}{$r_i=2$ for one client, and $r_i=1$ for two clients.} \\ \hline
$S_9$ & \multicolumn{1}{l|}{$r_i=2$ for two clients, and $r_i=0$ for one client.} \\ \hline
$S_{10}$ & \multicolumn{1}{l|}{$r_i=2$ for two clients, and $r_i=1$ for one client.} \\ \hline
$S_{11}$ & \multicolumn{1}{l|}{$r_i=2$ for all clients. Absorbing state.} \\ \hline
\end{tabular}
\\
\caption{Markov states description for $k=2$.}
\label{table:table_k2}
\end{table}

\begin{table}[t]
\begin{frame}

\resizebox{\linewidth}{!}{%
$\displaystyle
\left( \begin{array}{cccccccccccc}
p^{3} &  3sp^{2} &  3s^{2}p &  0&  0&  s^{3}&  0&  0&  0&  0&  0 & 0  \\
0 &  p^{3}&  0&  sp^{2}&  2sp^{2}&  s^{2}p&  2s^{2}p&  0&  s^{3}&  0&  0  & 0  \\
0 &  0&  p^{3}&  0&  0&  sp^{2}&  2sp^{2}&  0&  2s^{2}p&  s^{2}p&  s^{3}  & 0 \\
0&  0&  0&  p^{2}&  0&  0&  2sp&  0&  s^{2}&  0&  0 & 0 \\
0&  0&  0&  0&  p^{3}&  0&  2sp^{2}&  sp^{2} &  2s^{2}p&  s^{2}p&  s^{3} & 0 \\
0&  0&  0&  0&  0&  p^{3}&  0&  0&  3sp^{2}&  0&  3s^{2}p & s^{3} \\
0&  0&  0&  0&  0&  0&  p^{2}&  0&  sp &  sp &  s^{2} & 0 \\
0&  0&  0&  0&  0&  0&  0&  p^{3}+sp^{2}&  2sp^{2}+2s^{2}p&  0&  s^{3}+s^{2}p & 0  \\
0&  0&  0&  0&  0&  0&  0&  0&  p^{2}&  0&  2sp  & s^{2}  \\
0&  0&  0& 0&  0&  0&  0&  0&  0&  p&  s& 0   \\
0 & 0 & 0 & 0 & 0 & 0 & 0 & 0 & 0 & 0 & p & s \\
0 & 0 & 0 & 0 & 0 & 0 & 0 & 0 & 0 & 0 & 0 & 1\\

\end{array} \right)$
}
\end{frame}
\caption{Transition matrix for $k=2$, element $a_{ij}$ represents transition probability from state $S_i$ to $S_j$.}
\label{table:matrix_k2}
\end{table}
\subsection{Markov chain for $k=3$}
When $k=3$, the set of $\mathsf{span}(P)$ is given as $\{p_1, p_2, p_3, p_1\oplus p_2, p_1\oplus p_3, p_2\oplus p_3, p_1\oplus p_2\oplus p_3\}$. An AP can not transmit a linearly independent codeword for three unsatisfied clients iff $r_i=2$ for all clients and $T_d=7$, representated by state $S_{27}$. A description of the other states is given in Table~\ref{table:table_k3}, and a transition matrix is given in Table~\ref{table:matrix_k3}.

We use $T_d$ to distinguish between different states with same rank $r_i$ distribution. State $S_{15}$ for example could represents a scenario when $c_1$ and $c_2$ has $p_1$, and $c_3$ has $p_2$ and $p_3$ resulting in $T_d=4$. While $S_{16}$ could represent a scenario when $c_1$ and $c_2$ has $p_1$, and $c_3$ has $p_1$ and $p_2$ resulting in $T_d=3$.

\begin{table}[h]
\begin{tabular}{|c|l}
\hline
\textbf{State} & \multicolumn{1}{c|}{\textbf{Description}} \\ \hline
$S_0$ & \multicolumn{1}{l|}{$r_i=0$ for all clients.} \\ \hline
$S_1$ & \multicolumn{1}{l|}{$r_i=0$ for two clients, and $r_i=1$ for one client.} \\ \hline
$S_2$ & \multicolumn{1}{l|}{$r_i=1$ for two clients, $r_i=0$ for one client, and $T_d=1$.} \\ \hline
$S_3$ & \multicolumn{1}{l|}{$r_i=2$ for one clients, and $r_i=0$ for two client.} \\ \hline
$S_4$ & \multicolumn{1}{l|}{$r_i=1$ for two clients, $r_i=0$ for one client, and $T_d=2$.} \\ \hline
$S_5$ & \multicolumn{1}{l|}{\begin{tabular}[c]{@{}l@{}}$r_i=2$, $r_i=1$, and $r_i=0$ for each of the clients\\ and $T_d=3$.  \end{tabular}} \\ \hline
$S_6$ & \multicolumn{1}{l|}{\begin{tabular}[c]{@{}l@{}}$r_i=2$, $r_i=1$, and $r_i=0$ for each of the clients\\ and $T_d=4$.  \end{tabular}} \\ \hline
$S_7$ & \multicolumn{1}{l|}{$r_i=0$ for two clients, and $r_i=3$ for one client.} \\ \hline
$S_8$ & \multicolumn{1}{l|}{$r_i=1$ for all clients, and $T_d=1$.} \\ \hline
$S_9$ & \multicolumn{1}{l|}{$r_i=1$ for all clients, and $T_d=2$.} \\ \hline
$S_{10}$ & \multicolumn{1}{l|}{$r_i=1$ for all clients and $T_d=3$.} \\ \hline
$S_{11}$ & \multicolumn{1}{l|}{$r_i=1$ for two clients, and $r_i=2$ for one client.} \\ \hline
$S_{12}$ & \multicolumn{1}{l|}{$r_i=2$ for two clients, $r_i=0$ for one client, and $T_d\neq 5$.} \\ \hline
$S_{13}$ & \multicolumn{1}{l|}{$r_i=3$, $r_i=1$, and $r_i=0$ for each of the clients.} \\ \hline
$S_{14}$ & \multicolumn{1}{l|}{$r_i=2$ for two clients, $r_i=0$ for one client and $T_d=5$.} \\ \hline
$S_{15}$ & \multicolumn{1}{l|}{$r_i=1$ for two clients, $r_i=2$ for one client and $T_d=4$.} \\ \hline
$S_{16}$ & \multicolumn{1}{l|}{\begin{tabular}[c]{@{}l@{}}$r_i=1$ for two clients, $r_i=2$ for one client\\ and $T_d=3$ or $T_d=5$.\end{tabular}} \\ \hline
$S_{17}$ & \multicolumn{1}{l|}{$r_i=2$ for two clients, and $r_i=1$ for one client.} \\ \hline
$S_{18}$ & \multicolumn{1}{l|}{$r_i=1$ for two clients, and $r_i=3$ for one client.} \\ \hline
$S_{19}$ & \multicolumn{1}{l|}{$r_i=3$, $r_i=2$, and $r_i=0$ for each of the clients.} \\ \hline
$S_{20}$ & \multicolumn{1}{l|}{\begin{tabular}[c]{@{}l@{}}$r_i=2$ for two clients, $r_i=1$ for one client\\ and $T_d=5$ or $T_d=6$.\end{tabular} } \\ \hline
$S_{21}$ & \multicolumn{1}{l|}{$r_i=2$ for all clients, and $2\leq T_d\leq 6$.} \\ \hline
$S_{22}$ & \multicolumn{1}{l|}{$r_i=3$, $r_i=2$, and $r_i=1$ for each of the clients.} \\ \hline
$S_{23}$ & \multicolumn{1}{l|}{$r_i=3$ for two clients, and $r_i=0$ for one client.} \\ \hline
$S_{24}$ & \multicolumn{1}{l|}{$r_i=2$ for all clients and $T_d=7$.} \\ \hline
$S_{25}$ & \multicolumn{1}{l|}{$r_i=2$ for two clients, and $r_i=3$ for one client.} \\ \hline
$S_{26}$ & \multicolumn{1}{l|}{$r_i=3$ for two clients, and $r_i=1$ for one client.} \\ \hline
$S_{27}$ & \multicolumn{1}{l|}{$r_i=3$ for two clients, and $r_i=2$ for one client.} \\ \hline
$S_{28}$ & \multicolumn{1}{l|}{$r_i=3$ for all clients. Absorbing state.} \\ \hline
\end{tabular}
\\
\caption{Markov states description for $k=3$.}
\label{table:table_k3}
\end{table}

\begin{table*}[t]
\begin{frame}

\resizebox{\linewidth}{!}{%
$\displaystyle
\left( \begin{array}{ccccccccccccccccccccccccccccc}
p^3 & 3sp^2 & 3s^2p & 0 & 0 & 0 & 0 & 0 & s^3 & 0 & 0 & 0 & 0 & 0 & 0 & 0 & 0 & 0 & 0 & 0 & 0 & 0 & 0 & 0 & 0 & 0 & 0 & 0 & 0 \\
0 & p^3 & 0 & sp^2 & 2sp^2 & 2s^2p & 0 & 0 & 0 & s^2p & 0 & 0 & 0 & 0 & 0 & 0 & s^3 & 0 & 0 & 0 & 0 & 0 & 0 & 0 & 0 & 0 & 0 & 0 & 0  \\
0 & 0 & p^3 & 0 & 0 & 2sp^2 & 0 & 0 & 0 & sp^2 & 0 & 2s^2p & s^2p & 0 & 0 & 0 & 0 & s^3 & 0 & 0 & 0 & 0 & 0 & 0 & 0 & 0 & 0 & 0 & 0  \\
0 & 0 & 0 & p^3 & 0 & 0 & 2sp^2 & sp^2 & 0 & 0 & 0 & s^2p & 0 & 2s^2p & 0 & 0 & 0 & 0 & s^{3} & 0 & 0 & 0 & 0 & 0 & 0 & 0 & 0 & 0 & 0  \\
0 & 0 & 0 & 0 & p^3 & 0 & 2sp^2 & 0 & 0 & 0 & sp^2 & 0 & 0 & 0 & s^2p & 2s^2p & 0 & 0 & 0 & 0 & s^3 & 0 & 0 & 0 & 0 & 0 & 0 & 0 & 0  \\
0 & 0 & 0 & 0 & 0 & p^3 & 0 & 0 & 0 & 0 & 0 & 0 & 0 & sp^2 & sp^2 & sp^2 & 0 & s^2p & s^2p & s^2p & 0 & 0 & s^3 & 0 & 0 & 0 & 0 & 0 & 0  \\
0 & 0 & 0 & 0 & 0 & 0 & p^3 & 0 & 0 & 0 & 0 & 0 & 0 & sp^2 & sp^2 & 0 & sp^2 & s^2p & s^2p & s^2p & 0 & 0 & s^3 & 0 & 0 & 0 & 0 & 0 & 0 \\
0 & 0 & 0 & 0 & 0 & 0 & 0 & p^2 & 0 & 0 & 0 & 0 & 0 & 2sp & 0 & 0 & 0 & 0 & s^2 & 0 & 0 & 0 & 0 & 0 & 0 & 0 & 0 & 0 & 0 \\
0 & 0 & 0 & 0 & 0 & 0 & 0 & 0 & p^3 & 0 & 0 & 0 & 0 & 0 & 0 & 0 & 3sp^2 & 3s^2p & 0 & 0 & 0 & s^3 & 0 & 0 & 0 & 0 & 0 & 0 & 0 \\
0 & 0 & 0 & 0 & 0 & 0 & 0 & 0 & 0 & p^3 & 0 & 0 & 0 & 0 & 0 & 3sp^2 & 0 & 3s^2p & 0 & 0 & 0 & s^3 & 0 & 0 & 0 & 0 & 0 & 0 & 0  \\
0 & 0 & 0 & 0 & 0 & 0 & 0 & 0 & 0 & 0 & p^3 & 0 & 0 & 0 & 0 & 2sp^2 & sp^2 & 3s^2p & 0 & 0 & 0 & s^3 & 0 & 0 & 0 & 0 & 0 & 0 & 0 \\
0 & 0 & 0 & 0 & 0 & 0 & 0 & 0 & 0 & 0 & 0 & p^3 & 0 & 0 & 0 & 0 & 0 & 2sp^2 & sp^2 & 0 & 0 & s^2p & 2s^2p & 0 & 0 & s^3 & 0 & 0 & 0 \\
0 & 0 & 0 & 0 & 0 & 0 & 0 & 0 & 0 & 0 & 0 & 0 & p^3 & 0 & 0 & 0 & 0 & sp^2 & 0 & 2sp^2 & 0 & 0 & 2s^2p & s^2p & 0 & 0 & s^3 & 0 & 0 \\
0 & 0 & 0 & 0 & 0 & 0 & 0 & 0 & 0 & 0 & 0 & 0 & 0 & p^2 & 0 & 0 & 0 & 0 & sp & sp & 0 & 0 & s^2 & 0 & 0 & 0 & 0 & 0 & 0 \\
0 & 0 & 0 & 0 & 0 & 0 & 0 & 0 & 0 & 0 & 0 & 0 & 0 & 0 & p^3 & 0 & 0 & 0 & 0 & 2sp^2 & sp^2 & 0 & 2s^2p & s^2p & 0 & 0 & s^3 & 0 & 0 \\
0 & 0 & 0 & 0 & 0 & 0 & 0 & 0 & 0 & 0 & 0 & 0 & 0 & 0 & 0 & p^3 & 0 & sp^2 & sp^2 & 0 & sp^2 & s^2p & 2s^2p & 0 & 0 & s^3 & 0 & 0 & 0 \\
0 & 0 & 0 & 0 & 0 & 0 & 0 & 0 & 0 & 0 & 0 & 0 & 0 & 0 & 0 & 0 & p^3 & 0 & sp^2 & 0 & 2sp^2 & s^2p & 2s^2p & 0 & 0 & s^3 & 0 & 0 & 0  \\
0 & 0 & 0 & 0 & 0 & 0 & 0 & 0 & 0 & 0 & 0 & 0 & 0 & 0 & 0 & 0 & 0 & p^3 & 0 & 0 & 0 & sp^2 & 2sp^2 & 0 & 0 & 2s^2p & s^2p & s^3 & 0  \\
0 & 0 & 0 & 0 & 0 & 0 & 0 & 0 & 0 & 0 & 0 & 0 & 0 & 0 & 0 & 0 & 0 & 0 & p^2 & 0 & 0 & 0 & 2sp & 0 & 0 & s^2 & 0 & 0 & 0\\
0 & 0 & 0 & 0 & 0 & 0 & 0 & 0 & 0 & 0 & 0 & 0 & 0 & 0 & 0 & 0 & 0 & 0 & 0 & p^2 & 0 & 0 & sp & sp & 0 & 0 & s^2 & 0 &  0\\
0 & 0 & 0 & 0 & 0 & 0 & 0 & 0 & 0 & 0 & 0 & 0 & 0 & 0 & 0 & 0 & 0 & 0 & 0 & 0 & p^3 & 0 & 2sp^2 & 0 & sp^2 & 2s^2p & s^2p & s^3 & 0 \\
0 & 0 & 0 & 0 & 0 & 0 & 0 & 0 & 0 & 0 & 0 & 0 & 0 & 0 & 0 & 0 & 0 & 0 & 0 & 0 & 0 & p^3 & 0 & 0 & 0 & 3sp^2 & 0 & 3s^2p & s^3 \\
0 & 0 & 0 & 0 & 0 & 0 & 0 & 0 & 0 & 0 & 0 & 0 & 0 & 0 & 0 & 0 & 0 & 0 & 0 & 0 & 0 & 0 & p^2 & 0 & 0 & sp & sp & s^2 & 0 \\
0 & 0 & 0 & 0 & 0 & 0 & 0 & 0 & 0 & 0 & 0 & 0 & 0 & 0 & 0 & 0 & 0 & 0 & 0 & 0 & 0 & 0 & 0 & p & 0 & 0 & s & 0 & 0 \\
0 & 0 & 0 & 0 & 0 & 0 & 0 & 0 & 0 & 0 & 0 & 0 & 0 & 0 & 0 & 0 & 0 & 0 & 0 & 0 & 0 & 0 & 0 & 0 & p^2(s+p) & 2sp(s+p) & 0 & s^2(s+p) & 0  \\
0 & 0 & 0 & 0 & 0 & 0 & 0 & 0 & 0 & 0 & 0 & 0 & 0 & 0 & 0 & 0 & 0 & 0 & 0 & 0 & 0 & 0 & 0 & 0 & 0 & p^2 & 0 & 2sp & s^2 \\ 
0 & 0 & 0 & 0 & 0 & 0 & 0 & 0 & 0 & 0 & 0 & 0 & 0 & 0 & 0 & 0 & 0 & 0 & 0 & 0 & 0 & 0 & 0 & 0 & 0 & 0 & p & s & 0 \\
0 & 0 & 0 & 0 & 0 & 0 & 0 & 0 & 0 & 0 & 0 & 0 & 0 & 0 & 0 & 0 & 0 & 0 & 0 & 0 & 0 & 0 & 0 & 0 & 0 & 0 & 0 & p & s \\
0 & 0 & 0 & 0 & 0 & 0 & 0 & 0 & 0 & 0 & 0 & 0 & 0 & 0 & 0 & 0 & 0 & 0 & 0 & 0 & 0 & 0 & 0 & 0 & 0 & 0 & 0 & 0 & 1\\

\end{array} \right)$
}
\end{frame}
\caption{Transition matrix for $k=3$, element $a_{ij}$ represents transition probability from state $S_i$ to $S_j$.}
\label{table:matrix_k3}
\end{table*}
\subsection{Analytical results}
Based on the solution of the set of equations $\mu_i$, a plot of $E[t_x]$ for exact minimum number of transmissions using XOR coded erasure code, compared with the analytical result of RL code over $GF(2)$ and MDS code adopted from~\cite{Ghaderi07, Sagduyu07}, is shown in Figure~\ref{fig:Rt}. For simulation carried to verify the correctness of our analytical result, the AP chooses a codeword which is linearly independent for maximum number of unsatisfied clients. When more than one such codewords exist which is linearly independent for maximum number of unsatisfied clients, we arbitrarily choose one of them.

In Figure~\ref{fig:subfigure1} and~\ref{fig:subfigure2} the $R_t$ of minimum number of transmissions almost overlaps the $R_t$ of MDS code. To highlight the difference between their performances, in Figure~\ref{fig:subfigure3} we plot difference between $R_t$ of exact minimum number of transmissions for $GF(2)$ erasure code and MDS code. The graphs shows that as $k$ increases the difference in $R_t$ decreases.

\section{Combinatorial Approach}\label{sec:combinatorial}
In this section we derive an upper bound on the minimum number of transmissions for $GF(2)$ erasure code. Our analysis makes use of the fact that the AP can generate a linearly independent codeword if there exist a codeword which is not given by the span of codewords of all unsatisfied clients. The cardinality of $\mathsf{span}(Y_i)$ is given as,
\begin{equation}
|\mathsf{span}(Y_i)|=\sum_{j=1}^{r_i} {r_i \choose j}=2^{r_i}-1.
\end{equation}

\noindent The AP can generate a linearly independent codeword if the following inequality is satisfied,
\begin{equation}\label{eq:inequality}
\sum_{i=1}^{n=3} (2^{r_i}-1) < |\mathsf{span}(P)|=2^k -1.
\end{equation}

\noindent We are interested to find the rank distribution $r_i$ under the constraint that $r_i<k, \forall r_i$. When $r_i=k$ for at least one of the client, then a linearly independent codeword can be found as the inequality $q\geq n$ will then be satisfied, where with slight abuse of notation $n$ is the number of unsatisfied clients. Based on inequality~\eqref{eq:inequality}, there exist a linearly independent codeword for all unsatisfied clients, if $r_i\leq k-2$ for two clients, and $r_i\leq k-1$ for another client.

\begin{lemma}\label{lemma}
For three unsatisfied clients, the only rank distribution where a linearly independent codeword over $GF(2)$ for all clients can not necessarily be generated is given by $r_i=k-1, \forall r_i$. 
\end{lemma}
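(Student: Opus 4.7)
The plan is to partition the rank distributions with every $r_i<k$ into those already handled by inequality~\eqref{eq:inequality} and those that are not, and then to treat the remaining cases by direct subspace counting (for one direction) and an explicit construction (for the other).

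First I would observe that inequality~\eqref{eq:inequality} already covers every rank distribution in which at most one client satisfies $r_i=k-1$: if two clients have $r_i\le k-2$, then $\sum_{i=1}^{3}(2^{r_i}-1)\le (2^{k-1}-1)+2(2^{k-2}-1)=2^{k}-3<2^{k}-1$, so a codeword outside $\bigcup_i \mathsf{span}(Y_i)$ exists. Thus the only distributions where the inequality does not immediately decide the question are those where at least two of the three clients have rank exactly $k-1$. This splits into two subcases to handle separately: (i) exactly two clients at rank $k-1$ and the third at rank $\le k-2$, and (ii) all three at rank $k-1$.

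For subcase (i), I would replace the union bound with an exact count of $|\mathsf{span}(Y_1)\cup\mathsf{span}(Y_2)|$. If $\mathsf{span}(Y_1)=\mathsf{span}(Y_2)$, the union has $2^{k-1}$ vectors; otherwise the two $(k-1)$-dimensional subspaces meet in a $(k-2)$-dimensional subspace, giving $2\cdot 2^{k-1}-2^{k-2}=3\cdot 2^{k-2}$ vectors in the union. In either case the complement in $GF(2)^{k}$ has at least $2^{k-2}$ nonzero vectors, which strictly exceeds $|\mathsf{span}(Y_3)|-1\le 2^{k-2}-1$ because $r_3\le k-2$. Hence a codeword avoiding all three spans must exist.

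For subcase (ii), I need to show the quantifier "not \emph{necessarily}" is correct: there is at least one configuration of three $(k-1)$-dimensional subspaces whose union is all of $GF(2)^{k}$. The clean construction is to fix any $(k-2)$-dimensional subspace $W\subseteq GF(2)^{k}$ and take $\mathsf{span}(Y_1),\mathsf{span}(Y_2),\mathsf{span}(Y_3)$ to be the three distinct hyperplanes of $GF(2)^{k}$ containing $W$; in the two-dimensional quotient $GF(2)^{k}/W$ these correspond to the three one-dimensional subspaces, which together exhaust the quotient, so lifting back shows $\mathsf{span}(Y_1)\cup\mathsf{span}(Y_2)\cup\mathsf{span}(Y_3)=GF(2)^{k}$. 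This realises the obstruction and confirms that the all-$(k-1)$ distribution is the unique bad one.

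The main obstacle is handling subcase (i) cleanly: the inequality~\eqref{eq:inequality} fails there (for instance $r_3\ge 1$ gives $\sum(2^{r_i}-1)\ge 2^{k}-1$), so the argument must use the actual geometry of the two hyperplanes rather than the union bound, which is the only nonroutine step of the proof.
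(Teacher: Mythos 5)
Your proof is correct, but it takes a genuinely different route from the paper's. For the critical subcase of two clients at rank $k-1$, the paper appends the unknown coding vector $\mathbf{w}$ as an extra row of each $H_i$, performs column additions to isolate a small set of dependent columns, and reduces the existence of $\mathbf{w}$ to a system of XOR inequations (Equations~\eqref{eq:H_1}--\eqref{eq:H_3iii}), concluding with an appeal to ``sufficient degrees of freedom''; moreover it only treats the boundary case $r_3=k-2$ explicitly. Your inclusion--exclusion count of $|\mathsf{span}(Y_1)\cup\mathsf{span}(Y_2)|$ --- two distinct hyperplanes of $GF(2)^k$ meet in codimension $2$, so their union misses at least $2^{k-2}$ nonzero vectors, strictly more than $\mathsf{span}(Y_3)$ can absorb when $r_3\le k-2$ --- is sharper than the union bound of inequality~\eqref{eq:inequality}, handles all $r_3\le k-2$ uniformly rather than only $r_3=k-2$, and replaces the paper's unproved degrees-of-freedom claim with an exact count. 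For the converse direction, the paper exhibits a specific column configuration ($a=f$, $b=d$, $c=e$) whose three pairwise dependencies block every assignment of $\mathbf{w}$; your three hyperplanes through a common $(k-2)$-dimensional subspace is the same obstruction viewed structurally (for $k=2$ it is exactly state $S_7$ of Table~\ref{table:table_k2}), and the passage to the quotient $GF(2)^k/W\cong GF(2)^2$ makes it transparent that the three spans exhaust all of $GF(2)^k$. Both arguments establish the lemma; yours is the more elementary, more complete, and more rigorous of the two.
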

\begin{proof}
From the result of inequality~\eqref{eq:inequality} we know that a linearly independent codeword for all clients exist if $r_i\leq k-2$ for two clients, and $r_i\leq k-1$ for another client. We now propose an algorithm which can find a linearly independent codeword for all clients if $r_i=k-1$ for two clients, and $r_i=k-2$ for another client. Our algorithm makes use of the fact that for a matrix, the rank of its rows is equal to the rank of its columns.  Without loss of generality consider that $r_1=k-1$, $r_2=k-1$ and $r_3=k-2$. Denote by $\mathbf{w}=[x_1, x_2, \ldots, x_k]$, $x_i\in GF(2)$, the coding vector we wish to construct. Append $\mathbf{w}$ as the last row of $H_i$.

For $H_1$ and $H_2$ perform column additions such that there exist two columns in each matrix which are linearly dependent while ignoring the last row, we label these two columns in $H_1$ as $a$ and $b$, and in $H_2$ as $c$ and $d$. Similarly, column additions are performed for $H_3$ such that there exist three linearly dependent columns while ignoring the last row, which we label as $e, f$ and $g$. 

Then vector $\mathbf{w}$ is linearly independent for all $H_i$ if columns $a$ and $b$ are not linearly dependent in $H_1$, $c$ and $d$ are not linearly dependent in $H_2$, and if any two of the following columns $e, f$ and $g$ are not linearly dependent in $H_3$. Denote by $S$ the set of $k$ columns in $H_i$, then this results in the set of Equations~\eqref{eq:H_1}-\eqref{eq:H_3iii}, where $\alpha_i$ is equal to 1 if the $i^{th}$ column was added to the $u^{th}$ column such that columns $u$ and $v$ while ignoring the last row are dependent in $H_i$, and 0 otherwise. Vector $\mathbf{w}$ is linearly independent for all clients if Equations~\eqref{eq:H_1},~\eqref{eq:H_2}, and either~\eqref{eq:H_3i} or~\eqref{eq:H_3ii} or~\eqref{eq:H_3iii} are satisfied. It is intuitive to see that due to sufficient degree of freedom an assignment of $x_i$ such that $\mathbf{w}$ is linearly independent for all clients exist. 

We now show that when $r_i=k-1, \forall r_i$, a linearly independent codeword for all clients cannot necessarily be generated. Without performing any column additions, and while ignoring the last row, columns $a$ and $b$ are linearly dependent in $H_1$, $c$ and $d$ are linearly dependent in $H_2$, and $e$ and $f$ are linearly dependent in $H_3$. Further if $a=f, b=d,$ and $c=e$, then there does not exist an assignment of $x_i$ such that $\mathbf{w}$ is linearly independent for all clients. This completes the proof.

\begin{subequations}
\begin{align}
x_a\oplus \bigoplus_{\forall i : i\in S-\{a,b\}} \alpha_i x_i \neq x_b, \label{eq:H_1}\\
x_c\oplus \bigoplus_{\forall i : i\in S-\{c,d\}} \alpha_i x_i \neq x_d, \label{eq:H_2}\\
x_e\oplus \bigoplus_{\forall i : i\in S-\{e,f\}} \alpha_i x_i \neq x_f, \label{eq:H_3i}\\
x_e\oplus \bigoplus_{\forall i : i\in S-\{e,g\}} \alpha_i x_i \neq x_g, \label{eq:H_3ii}\\
x_f\oplus \bigoplus_{\forall i : i\in S-\{f,g\}} \alpha_i x_i \neq x_g, \label{eq:H_3iii}
\end{align}
\end{subequations}
\end{proof}

Therefore based on Lemma~\ref{lemma}, considering that a maximum of one client receives linearly dependent codewords when $r_i=k-1, \forall r_i$, then that client need to receive $k+\delta$ codewords, while the other two clients need to receive $k$ codewords before all clients are satisfied.

We derive the expected value of $\delta$ assuming that the network enters a state where $r_i=k-1, \forall r_i$ such that the AP can not transmit a linearly independent codeword for all clients. Then the probability $P[\delta=\beta]$ that one of the client receives $\beta$ redundant codewords is given by,
\begin{equation}
P[\delta=\beta]=(sp^2)^{\beta-1}(sp^2+2s^2p+s^3),
\end{equation}

\noindent and the expected value of $\delta$ is given as,
\begin{equation}
\begin{split}
E[\delta] & = \sum_{\beta=1}^\infty \beta P[\delta=\beta] \\
& = (sp^2+2s^2p+s^3) \sum_{\beta=1}^\infty \beta (sp^2)^{\beta-1} \\
					& = \frac{sp^2+2s^2p+s^3}{(1-sp^2)^2},
\end{split}
\end{equation}

\noindent where $0<E[\delta]\leq 1$ for $0\leq p<1$. As our objective is to derive an upper bound on the exact minimum number of transmission, we quantify the expected number of transmissions such that two clients receive $k$ codewords and another client receive $k+1$ codewords.

Let $\ell$ denote the number of transmissions before all client receives at least $k$ linearly independent codewords. Denote by $D(m)$ the probability that $\ell\leq m$, i.e. $P[\ell\leq m]$. The probability that a client receives $j$ codewords out of $m$ transmissions follows Bernoulli distribution and is given as, $P[X=j]$. The probability that the client receives at least $k$ linearly independent codewords after $m$ transmissions is given as,
\begin{equation}
D_1(m)=\sum_{j=k}^m {m \choose j} s^j p^{m-j},
\end{equation}

\noindent and the probability that the client receives at least $k$ linearly independent codewords and one dependent codeword (i.e. $k+1$ codewords) after $m$ transmissions is given as,
\begin{equation}
D_2(m)=\sum_{j=k+1}^m {m \choose j} s^j p^{m-j}.
\end{equation}

\noindent As the codeword reception at each client is independent, probability $D(m)$ is given as, $D_1(m)^2\cdot D_2(m)$. The expected number of transmissions to transmit $k$ codewords to two clients and $k+1$ codewords to another clients is given as,
\begin{equation}
\begin{split}
E[\ell] & = \sum_{m=0}^\infty (1-D(m)) \\
        & = k+1+\sum_{m=k+1}^\infty (1-D(m)).
\end{split}
\end{equation}

\begin{figure}[t]
	\centering
		\includegraphics[width=0.50\textwidth]{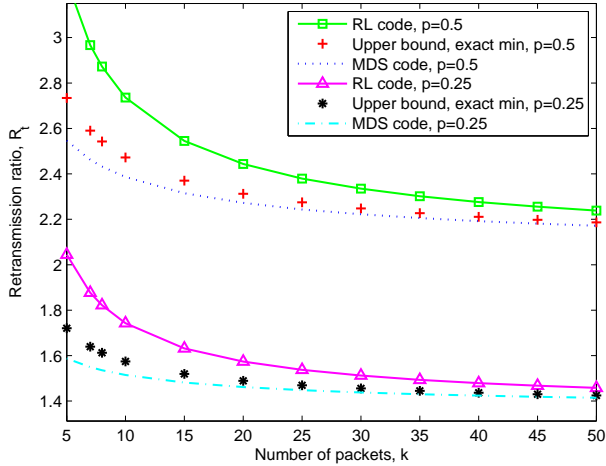}
	\caption{Retransmissions ratio $R_t$ of the upper bound of minimum number of $GF(2)$ coded wireless multicasting to three clients for $p=0.25$ and $p=0.5$ compared with RL coding over $GF(2)$ and MDS coding.}
	\label{fig:upper_bound}
\end{figure}

The quantity $E[\ell]$ is the upper bound of $E[t_x]$ for minimum number of transmissions using $GF(2)$ erasure coding. A plot of $E[\ell]$ compred with RL coding over $GF(2)$ and MDS code for $p=0.25$ and $p=0.5$ is given in Figure~\ref{fig:upper_bound}. Our result show that the value of $E[\ell]$ is very close to the expected number of transmissions using MDS code (which is also the lower bound of the expected number of transmission for any linear erasure code for wireless multicasting).

\section{Conclusion}\label{sec:conclusion}
In this paper we studied the problem of finding the minimum number of expected transmissions for wireless multicasting to three clients using XOR coded transmissions. For such a network a linearly independent codeword for all clients can not necessarily be generated by the transmitter. Our Markov chain analysis result showed that the difference between the minimum number of expected XOR coded transmission and expected transmission using MDS coding is negligible even for small packet batch size. Using combinatorial analysis we then showed that when using XOR coding, two of the clients need to receive $k$ codewords, while another client need to receive an expected $k+1$ codewords at most before all clients are satisfied.

We then derived the expected number of transmissions before two clients receive $k$ codewords and another client receive $k+1$ codewords. This expected number of transmission then served as the upper bound on the minimum number of transmissions using XOR coded multicasting to three clients. Our result on the upper bound of the minimum number of transmissions along with exact minimum number of transmission for $k\leq 3$ using Markov chain showed that the difference between minimum number of expected transmissions using XOR coding and expected number of transmissions using MDS code is negligible. Part of our future work is to generalize the results of our paper for an arbitrary number of clients.

\bibliographystyle{IEEEtranS}
\bibliography{IEEEabrv,mainJ}

\begin{thebibliography}{10}
\providecommand{\url}[1]{#1}
\csname url@samestyle\endcsname
\providecommand{\newblock}{\relax}
\providecommand{\bibinfo}[2]{#2}
\providecommand{\BIBentrySTDinterwordspacing}{\spaceskip=0pt\relax}
\providecommand{\BIBentryALTinterwordstretchfactor}{4}
\providecommand{\BIBentryALTinterwordspacing}{\spaceskip=\fontdimen2\font plus
\BIBentryALTinterwordstretchfactor\fontdimen3\font minus
  \fontdimen4\font\relax}
\providecommand{\BIBforeignlanguage}[2]{{%
\expandafter\ifx\csname l@#1\endcsname\relax
\typeout{** WARNING: IEEEtranS.bst: No hyphenation pattern has been}%
\typeout{** loaded for the language `#1'. Using the pattern for}%
\typeout{** the default language instead.}%
\else
\language=\csname l@#1\endcsname
\fi
#2}}
\providecommand{\BIBdecl}{\relax}
\BIBdecl

\bibitem{ETSI11}
``{Digital Video Broadcasting (DVB); Upper Layer FEC for DVB Systems},'' ETSI,
  Tech. Rep. 102 993 V1.1.1, 2011.

\bibitem{Bertsekas08}
D.~P. Bertsekas and J.~N. Tsitsiklis, \emph{{Introduction to
  Probability}}.\hskip 1em plus 0.5em minus 0.4em\relax Athena Scientific,
  2008.

\bibitem{KChi10}
K.~Chi, X.~Jiang, and S.~Horiguchi, ``{Network coding-based reliable multicast
  in wireless networks},'' \emph{Elsevier Computer Networks}, vol.~54, no.~11,
  pp. 1823--1836, August 2010.

\bibitem{Dong11}
W.~Dong, C.~Chen, X.~Liu, J.~Bu, and Y.~Gao, ``{A Lightweight and Density-Aware
  Reprogramming Protocol for Wireless Sensor Networks},'' \emph{IEEE
  Transactions on Mobile Computing}, vol.~10, no.~10, pp. 1403--1415, October
  2011.

\bibitem{Ghaderi07}
M.~Ghaderi, D.~Towsley, and J.~Kurose, ``{Network Coding Performance for
  Reliable Multicast},'' in \emph{IEEE MILCOM'07}, Orlando, USA, October 2007.

\bibitem{Lucani09}
D.~E. Lucani, M.~Medard, and M.~Stojanovic, ``{Random Linear Network Coding for
  Time-Division Duplexing: Field Size Considerations},'' in \emph{IEEE
  GLOBECOM'09}, Honolulu, USA, November-December 2009.

\bibitem{Qureshi14}
J.~Qureshi, C.~H. Foh, and J.~Cai, ``{Online XOR packet coding: Efficient
  single-hop wireless multicasting with low decoding delay},'' \emph{Elsevier
  Computer Communications}, vol.~39, pp. 65--77, February 2014.

\bibitem{Chandra09}
{R. Chandra et al.}, ``{DirCast: A Practical and Efficient Wi-Fi Multicast
  System},'' in \emph{IEEE ICNP'09}, Princeton, USA, October 2009.

\bibitem{Rouayheb07}
S.~Y.~E. Rouayheb, M.~A.~R. Chaudhry, and A.~Sprintson, ``{On the Minimum
  Number of Transmissions in Single-Hop Wireless Coding Networks},'' in
  \emph{IEEE ITW'07}, Lake Tahoe, USA, September 2007.

\bibitem{Sagduyu07}
Y.~E. Sagduyu and A.~Ephremides, ``{On Network Coding for Stable Multicast
  Communication},'' in \emph{MILCOM'07}, Orlando, USA, October 2007.

\end{thebibliography}

\end{document}